\newtheorem{theorem}{Theorem}
\newtheorem{proposition}{Proposition}
\newcommand{\Sum}[2]{\sum_{#1}^{#2}}
\begin{document}
\title{Decentralized Task Offloading and Load-Balancing for Mobile Edge Computing in Dense Networks}
\author{{Mariam Yahya\IEEEauthorrefmark{1}, Alexander Conzelmann\IEEEauthorrefmark{1}, and Setareh Maghsudi}
\thanks{\IEEEauthorrefmark{1} Equal contribution}

\thanks{This research was supported by Grant MA 7111/6-1 from the German Research Foundation (DFG) and Grant 16KISK035 from the German Federal Ministry of Education and Research (BMBF).} 

\thanks{M. Yahya and A. Conzelmann are with the Department of Computer Science, University of Tübingen, 72074 Tübingen, Germany (email: mariam.yahya@uni-tuebingen.de; a.conzelmann@student.uni-tuebingen.de).} \thanks{S. Maghsudi is with the Faculty of Electrical Engineering and Information Technology, Ruhr-University Bochum, 44801 Bochum, Germany (email: setareh.maghsudi@ruhr-uni-bochum.de).}      } 
\maketitle
\begin{abstract}
We study the problem of decentralized task offloading and load-balancing in a dense network with numerous devices and a set of edge servers. Solving this problem optimally is complicated due to the unknown network information and random task sizes. The shared network resources also influence the users' decisions and resource distribution. Our solution combines the mean field multi-agent multi-armed bandit (MAB) game with a load-balancing technique that adjusts the servers' rewards to achieve a target population profile despite the distributed user decision-making. Numerical results demonstrate the efficacy of our approach and the convergence to the target load distribution.
\end{abstract}

\begin{IEEEkeywords}
Edge computing, mean field game, multi-armed bandits, load-balancing.
\end{IEEEkeywords}
\section{Introduction}
\IEEEPARstart{T}{HE} stringent requirements of the emerging 5G/6G networks that necessitate low latency and high computational capabilities drove the evolution of mobile edge computing (MEC) technology. In MEC, computational and storage resources are strategically positioned near end users with limited resources in a communication network.
Computation task offloading is an essential element of the MEC paradigm in which resource-limited devices offload their computation-intensive tasks for processing at edge servers with low computation and communication delays. Typically, MEC networks consist of a substantial number of users and servers, thereby raising the critical issue of effectively allocating communication and computation resources to optimize network performance. Another challenge is balancing the load among the network's edge servers to avoid prolonged service delays, which reduce the quality-of-service and decrease the system throughput \cite{JAFARNEJADGHOMI201750}. In some variations, computation costs or authorizations at different servers may vary, imposing specific user loads on the servers. In centralized systems, a controller oversees incoming tasks and assigns them to servers, as exemplified by the Hadoop MapReduce software \cite{dsouza2015survey}. Alternatively, in decentralized load-balancing, agents autonomously allocate their tasks to the servers \cite{liu2021game}. These algorithms can be either dynamic or static, depending on the current environment.

Optimal task offloading and load-balancing are further compounded in dynamic environments and in the absence of a centralized decision-maker. Therefore, solving the task offloading problem given precise network information, as in \cite{shi2023} among many others, is unrealistic in dynamic and uncertain environments. Consequently, some authors use online learning methods such as multi-armed bandits (MABs) to enable learning the network parameters and deciding about task offloading accordingly. References \cite{Cheng23:DTM} and \cite{wang2022decentralized} apply multi-agent MABs for users with homogeneous and heterogeneous offloading requirements, where the reward considers the intrinsic task value and delay cost. Additionally, a budgeted MAB model can jointly minimize the delay and the energy cost \cite{Ghoorchian2021}.

Compared to the state-of-the-art, our work offers several novelties. For example, \cite{Cheng23:DTM,wang2022decentralized,Ghoorchian2021} assume equal resource availability for all devices and disregard potential competition over shared resources. Additionally, such approaches do not apply to dense networks due to excessive complexity. Furthermore, they can result in an undesired distribution of devices across the servers, thus degrading the network performance.

The mean field game (MFG) theory studies differential games with a large population of decision makers \cite{banez2021mean}. Although its seminal model does not include uncertainty, MFG can be combined with learning to accommodate the lack of information. Reference \cite{gummadi2013mean} proposes an approach for analyzing an MAB game with numerous agents using a mean field framework. It shows that a non-stationary environment appears stationary by approximating the agents' interactions using their long-term average. This mean field MAB model was used in energy harvesting ultra-dense small cell networks to address the problem of decentralized user association based on their harvested energy amount and the population profile on the shared channels \cite{maghsudi2017distributed}. In this paper, we use the mean field MAB model in conjunction with load-balancing techniques for task allocation, mainly based on network delays.

We study dynamic decentralized task offloading and load-balancing in a dense network comprising several devices and a set of edge servers. 
The primary objective of each device is to receive the result of the task before an expiry time \cite{mancuso2023efficiency}. This problem is complex as the delay is prone to high uncertainty due to the randomness in the communication channel, the task size, and the server's load. Additionally, since the devices share the wireless bandwidth and computation resources, the amount of resources allocated to each device depends on the actions of the others. To address this, we propose a novel approach: modeling the problem using multi-agent MABs, where each device is an agent aiming to identify the best server. The reward obtained by an agent depends on the arm and the actions of other agents. Under such a scheme, it is unrealistic to analyze the performance of each agent individually, and so the agents' interactions are viewed as a dynamic game \cite{gummadi2013mean}. 
 To enhance this game, we propose an agent-based dynamic load-balancing algorithm that guides the agents to achieve a target distribution over the servers in the mean field steady state. This target distribution is determined centrally by the network manager based on the servers' heterogeneous costs or authorizations as the servers often belong to different operators. However, the task offloading decisions are decentralized \cite{mancuso2023efficiency} and thus scalable in dense networks thanks to less complex interactions between the devices and a lower cost of acquiring information than centralized methods.
\section{System Model}
\label{sec:system_model}
We consider a dense network with $n$ edge servers and $m$ devices that offload heavy computational tasks of random sizes to the servers. The size of the task offloaded by device $k{\in} \{1, \dots, m\}$, denoted by $s_k$, follows a truncated normal distribution in the range $[s_a, s_b]$ \cite{chen2019mobile}. The devices compute tasks smaller than $s_a$ locally, and $s_b$ is the maximum task size that they can offload at once. Besides, to manage the dynamic nature of the problem, the devices use a time-slotted system to offload their computational tasks. At the beginning of a time slot~$t$, each device~$k$ initiates the task offloading to an edge server. The channel between a device and server $ i {\in} \{1, \dots, n \}$ is  Rayleigh fading with bandwidth $B_{ki,t} {=} B/(m f_{i,t})$, where $B$ is the bandwidth of the uplink channel, equal for all servers. $f_{i, t}$ is the fraction of devices that offload to server~$i$ at time $t$. The vector $\boldsymbol{f}_{i,t} {=}[f_{1,t}, \dots, f_{n,t}]$ forms the population profile, which is the distribution of the devices on the servers. 
The size of the outcome of the task offloading process is proportional to the size of the offloaded task $s_k$, with proportionality constant $\rho$ \cite{shi2023}. The fading on the downlink and uplink channels is equal due to channel reciprocity, and the bandwidth of the downlink channel is constant. The edge servers have the same processing speed, each operating at a rate of $F$ cycles/second. Each server can support running multiple tasks in parallel using virtualization technologies \cite{guo2022joint}.

The delay experienced by a device~$k$ offloading to edge server~$i$ comprises three primary components: the uplink delay $d^{\text{UL}}_{ki,t}$, the downlink delay $d^{\text{DL}}_{ki,t}$, and the processing delay $d^{\text{proc}}_{ki,t}$:
\begin{equation}
    d_{ki,t} =  d^{\text{UL}}_{ki,t} + d^{\text{DL}}_{ki,t} + d^{\text{proc}}_{ki,t}.
\end{equation}
%
\emph{1) Uplink Delay: }
This is the transmission delay of the offloaded task, and is inherently stochastic owing to the randomness in the communication channel and task size \cite{shi2023}:
\begin{equation}
    d^{\text{UL}}_{ki,t} = \frac{s_{k,t}}{B_{ki,t} r_{ki,t}} = \frac{s_{k,t}}{(B/m f_{i,t}) r_{ki,t}},
\end{equation}
where $r_{ki,t}$ is the spectral efficiency of the link between device~ $k$ and edge server~$i$, given by 
\begin{equation}
    r_{ki,t} = \log_2 \left( 1 + \frac{p_0 |g_{ki,t}|^2}{I_{ki,t} + N_0 }\right),
\label{eq:spectral_efficiency}
\end{equation}
where $g_{ki}$ is the parameter for the Rayleigh fading channel, implying that $h_{ki}{=}|g_{ki}|^2$ adheres to an exponential distribution. $p_0$ is the power transmitted by a given device, $N_0$ is the variance of the additive white Gaussian noise, and $I_{ki}$ is the interference power following a uniform distribution \cite{9136780}.\footnote{An extension to mobile users and other fading types is straightforward if the signal-to-noise ratios are independent and identically distributed (i.i.d.) across devices. }

\emph{2) Downlink Delay:} 
It is the transmission time of the result of the offloaded task back to the user. Since the size of the result is very small compared to the size of the offloaded task, we assume that the results are transmitted over a fixed bandwidth $B/ \nu$. The delay is given by
\begin{equation}
    d^{\text{DL}}_{ki,t} {=} \frac{\rho s_{k,t}}{ (B/ \nu) r_{ki,t}}.
    \label{eq:downlink_delay_system}
\end{equation}

\emph{3) Processing Delay:}
Let $F_{ki,t} {=} F/(m f_{i,t})$ be the CPU cycles that edge server~$i$ allocates for the computational task offloaded by device~$k$, and $c$ be the number of processing cycles required for each bit of the offloaded task, then the task processing delay is \cite{shi2023}
\begin{equation}
   d^{\text{proc}}_{ki,t} = \frac{c s_{k,t}}{F_{ki,t}} = \frac{c s_{k,t}}{ F/(m f_{i,t})} .
\label{eq:d_proc_system}
\end{equation}

The task of device~$k$ is successfully offloaded to server~~$i$ at round~$t$ if the total delay does not exceed $d_{\max}$,
\begin{equation}
    d_{ki,t} = \frac{s_{k,t}}{B/(m f_{i,t}) r_{ki,t}} +  \frac{\rho s_{k,t}}{\frac{B}{\nu} r_{ki,t}} +\frac{c s_{k,t}}{F{/}(m f_{i,t})} \le d_{\max}.
\end{equation}
\section{Problem Formulation}
The delays described in Section \ref{sec:system_model} are random due to the uncertainty in the offloaded task sizes and the channel fading. Additionally, the delays depend on the actions of the other devices due to the shared bandwidth and computation resources. The load distribution resulting from the decentralized nature of the problem might be undesirable. Therefore, this paper aims at forcing a target device distribution on the servers. To formulate our problem, let $R_{ki,t}$ be the binary reward obtained by device~$k$ when offloading to server~$i$, 
\begin{equation}
    R_{ki,t} =  \begin{cases} 
      1 & d_{ki,t} \leq d_{\max} \\
      0 & d_{ki,t} > d_{\max}
   \end{cases}.
\end{equation}
Also, denote the target population profile by $\boldsymbol{f}^{*}$. Each device~$k$ decides about offloading so as to maximize its cumulative reward over its lifetime $T$ and achieve $\boldsymbol{f}^{*}$. Let $A_t$ be the arm selected at round $t$, the optimization problem yields
\begin{align}
    \underset{A_t {\in} \{1,\ldots, n\} }{\text{maximize}} \quad \mathbb{E} \left[\Sum{t{=}1}{T} R_{k A_t,t} \right], \\
    \text{s.t.} \quad
    f_{i} = f_{i}^{*},   \quad \forall i \in \{1,\ldots, n\}  .
\end{align}
\section{The Mean Field Model}
Before modelling the problem as a mean field game, we describe the mean field MAB game with binary rewards \cite{gummadi2013mean}.

The game consists of $m$ \emph{agents}, each solving an MAB problem to select one of $n$ available \emph{arms} with unknown reward distributions. Upon selecting an arm, the agent receives a binary reward that depends on the number of other agents selecting the same arm. Additionally, each agent~$k$ has a \emph{type} $\boldsymbol{\theta}_k {\in} [0,1]^n$ that is sampled from a probability distribution $W(\boldsymbol{\theta})$ during the generation of the agent.  The $i$th element in $\boldsymbol{\theta}_k$ is a parameter affecting the agent's reward from arm~$i$. The types are i.i.d. across the agents. 
Additionally, each agent has a known state $\boldsymbol{z}_{k,t} {\in} \mathbb{Z}^{2n}_{+} $ with $2n$ elements representing the agent's total number of wins and losses for each of the $n$ arms over the agent's lifetime.
The state in our task offloading problem is determined by the success in offloading a task from device~$k$ to server~$i$ in round $t$. If $d_{ki,t} {\le} d_{\max}$ then the task is successfully offloaded (win) and element $(2i-1)$ of $\boldsymbol{z}_{k,t}$ is incremented by one. Otherwise, if  $d_{ki,t} > d_{\max}$, the number of losses in element $2i$ is incremented by one. 

The mean field model here assumes that all agents follow the same stationary \emph{policy} to select an arm. Formally, let $\sigma$ be the agent's policy, $\sigma: \mathbb{Z}^{2n}_{+} \rightarrow \{\boldsymbol{x} {\in} [0,1]^n: 
 \sum_{i{=}1}^n x_i {=}1 \}$,  where $\sigma (\boldsymbol{z}_{k,t}, i)$ denotes the probability that an agent chooses arm~$i$ when its current state is $\boldsymbol{z}_{k,t}$, $\sum_{i{=}1}^n \sigma (\boldsymbol{z}_{k,t}, i) {=}1$.  The policy used here is the upper confidence bound  (UCB) \cite{gummadi2013mean}.

To model the process of departure and arrival of agents, they are assumed to have a geometric lifetime, meaning that after each round they regenerate independently with probability $1{-}\beta$.

A key parameter in this model is the population profile, $\boldsymbol{f}_t {=} [f_{1,t}, f_{2,t}, \dots, f_{n,t} ]$, representing the fraction of agents playing each of the $n$ arms at time $t$.
The probability that an agent~$k$ receives a unit reward when selecting an arm~$i$, denoted by $Q({\theta}_{ki}, f_i)$, is a function of the agent's type ${\theta}_{ki}$ and the fraction of agents selecting arm~$i$, $f_i$. These rewards are independent across time, agents, and arms. Here, it is assumed that for a given ${\theta}_{ki}$, $Q({\theta}_{ki}, \cdot)$ is continuous in $f_i$.

Each time step $t$, every agent aims at solving its regret-minimization problem based on $\boldsymbol{z}_{k,t}$. Despite the interaction between agents, the system reaches the mean field steady state (MFSS) with stationary population profile $\boldsymbol{f}$ and state measure $\mu: \mathbb{Z}_{+}^{2n} \to \mathbb{R^{+}}$. The sufficient conditions for the existence of a unique MFSS are stated in \cite[Theorem 1] {gummadi2013mean}. In short: 
\begin{theorem}\label{theorem:mfss-exist}
    Let $L$ be the Lipschitz continuity constant.
    If the following conditions hold for all $a {\in} [0,1]$ and $x, x' {\in} [0,1]$:
    \begin{equation}
        |Q(a,x) -Q(a,x^{'})| \le L |x-x'|,
        \label{eq:condition1}
    \end{equation}
        \begin{equation}
            \beta (1+L) < 1,
        \label{eq:condition2}
    \end{equation}
where $\beta$ is the continuation probability, then for any policy $\sigma$, there exists a unique fixed MFSS.\label{th:theorem1}
\end{theorem}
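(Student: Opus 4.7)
The plan is to realize an MFSS as the unique fixed point of a mean-field update operator and then apply Banach's contraction mapping theorem. Let $\mathcal{P}$ be the set of probability measures on $\mathbb{Z}^{2n}_{+}$ equipped with the total variation distance $\norm{\cdot}_{\text{TV}}$; this is a complete metric space. Given $\mu \in \mathcal{P}$, the stationary policy $\sigma$ induces a population profile $\boldsymbol{f}(\mu)$ with $f_i(\mu) = \sum_{z} \sigma(z,i)\mu(z)$, which together with the type distribution $W$ and the reward function $Q$ determines a Markov transition kernel $P_{\boldsymbol{f}(\mu)}(\cdot\mid z)$ on the state space. Accounting for geometric lifetimes, I would define the update $T:\mathcal{P}\to\mathcal{P}$ by
\begin{equation}
    T\mu = (1-\beta)\nu_{0} + \beta\, P_{\boldsymbol{f}(\mu)}^{\top}\mu,
\end{equation}
where $\nu_{0}$ is the entry distribution of a newly regenerated agent. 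A fixed point of $T$ corresponds exactly to an MFSS.

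The heart of the argument is a contraction estimate for $T$. Since the $(1-\beta)\nu_{0}$ term cancels in the difference, the triangle inequality yields
\begin{align}
    \norm{T\mu - T\mu'}_{\text{TV}}
    &\le \beta\,\norm{P_{\boldsymbol{f}(\mu)}^{\top}(\mu-\mu')}_{\text{TV}} \nonumber \\
    &\quad + \beta\,\norm{\bigl(P_{\boldsymbol{f}(\mu)} - P_{\boldsymbol{f}(\mu')}\bigr)^{\top}\mu'}_{\text{TV}}.
\end{align}
The first summand is at most $\beta\norm{\mu-\mu'}_{\text{TV}}$ because stochastic kernels are non-expansive in total variation. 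For the second summand I would decompose the one-step transition into the profile-independent ingredients (arm draw from $\sigma$, type already fixed, deterministic count update given the outcome) and the single profile-dependent ingredient, namely the Bernoulli reward with success probability $Q(\theta_{i}, f_{i})$. By (\ref{eq:condition1}) this Bernoulli parameter moves by at most $L\abs{f_{i}(\mu)-f_{i}(\mu')}$, and the TV distance between two Bernoulli laws equals the difference of their parameters, so $\norm{P_{\boldsymbol{f}(\mu)}(\cdot\mid z) - P_{\boldsymbol{f}(\mu')}(\cdot\mid z)}_{\text{TV}} \le L\norm{\boldsymbol{f}(\mu)-\boldsymbol{f}(\mu')}_{\infty}$ uniformly in $z$. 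Since $\sigma(\cdot,i) \in [0,1]$, testing against these bounded functions also gives $\norm{\boldsymbol{f}(\mu)-\boldsymbol{f}(\mu')}_{\infty} \le \norm{\mu - \mu'}_{\text{TV}}$.

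Combining the two estimates yields $\norm{T\mu - T\mu'}_{\text{TV}} \le \beta(1+L)\norm{\mu-\mu'}_{\text{TV}}$, and by (\ref{eq:condition2}) the multiplier is strictly less than one. Banach's theorem then produces a unique $\mu^{*}\in\mathcal{P}$ with $T\mu^{*}=\mu^{*}$, and the pair $(\mu^{*},\boldsymbol{f}(\mu^{*}))$ is the unique MFSS; since no step uses any property of $\sigma$ beyond measurability, the conclusion holds uniformly over policies. The main obstacle is the second-summand bound: it requires carefully isolating which parts of the one-step dynamics actually depend on the profile and converting the scalar Lipschitz hypothesis on $Q$ into a total-variation bound on kernel differences. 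Completeness of $\mathcal{P}$ in TV, stochasticity of $P_{\boldsymbol{f}}$, and well-definedness of $T$ (preserving probability mass) are all routine verifications.
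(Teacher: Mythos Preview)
The paper does not prove this theorem at all; it quotes it from \cite{gummadi2013mean} and then \emph{applies} it to establish Proposition~1. So there is no ``paper's own proof'' to compare against. Your Banach--contraction argument is the right idea and is, in outline, exactly how Gummadi et al.\ prove the result: one builds the mean-field update on state measures, splits $T\mu-T\mu'$ into a ``same kernel, different measure'' piece (contracted by $\beta$ via non-expansiveness) and a ``same measure, different kernel'' piece (contracted by $\beta L$ via the Lipschitz assumption on $Q$ and the $1$-Lipschitz dependence of $\boldsymbol f$ on $\mu$), and then invokes the fixed-point theorem under $\beta(1+L)<1$.

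One technical point to tighten: you define $\mathcal P$ as probability measures on $\mathbb Z_+^{2n}$, but then treat the type as ``already fixed'' inside the transition kernel. Those two choices are inconsistent. Either enlarge the state space to $\mathbb Z_+^{2n}\times[0,1]^n$ so that $\mu$ carries the type (and $\nu_0$ is $\delta_{\boldsymbol 0}\otimes W$), or keep $\mu$ on $\mathbb Z_+^{2n}$ and let $P_{\boldsymbol f}$ average over $W$; in the latter case the Lipschitz bound on the Bernoulli parameter becomes an average over $\theta$ but is still controlled by $L$ since \eqref{eq:condition1} is uniform in $a$. Either fix leaves your contraction constant $\beta(1+L)$ unchanged, so the conclusion stands.
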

\section{Task Offloading as A Mean Field MAB Problem}
\label{sec:modelling_task_allocation}
In this section, we model the distributed offloading problem as a mean field MAB game where the offloading devices are the agents, and the edge servers are the arms. 
We present our proposed load-balancing method in Section \ref{sec:load_balancing}. 

In this game, each agent~$k$ decides about offloading without any side information or inter-device communication. A round $t$ corresponds to one of the following two cases:
        
\textbf{Case 1:} A regeneration round with probability $1-\beta$. The agent's state $\boldsymbol{z}_{k,t}$ is reset to zero and its type $\boldsymbol{\theta}_{k,t}$ is sampled from the distribution $W$.
The $i$th element of the type is the normalized signal-to-interference-plus-noise-ratio (SINR):
\begin{equation}
    \theta_{ki,t} = \frac{1}{\gamma_{\max}} \frac{p_0 h_{ki,t}}{I_{ki,t} {+} N_0 },
\end{equation}
where $\gamma_{\max}$ is the maximum SINR to ensure that $\theta_{ki,t} {\in} [0,1]$.
        
\textbf{Case 2:} A continuation round with probability $\beta$. The agent's type does not change, and the agent runs the UCB policy that maps the current state vector to an action (arm selection). As a result, it observes a binary reward that depends on the arm's type and the fraction of users selecting arm~$i$. To show this dependency, we rewrite the spectral efficiency in \eqref{eq:spectral_efficiency} as 
$r_{ki,t}{=}r(\theta_{ki,t}) {=} \log_2 \left(1 {+}\gamma_{\max} \theta_{ki,t} \right)$. The uplink rate of agent~$k$ to server $i$ yields $r^{\text{UL}} (\theta_{ki,t}, f_{i,t}){=} \frac{B}{m f_{i,t}} r(\theta_{ki,t})$. The downlink data rate is given by $r^{\text{DL}} (\theta_{ki}) {=}(B/\nu) r(\theta_{ki})$.
The total delay at time $t$ is calculated as follows, where we remove the subscript $t$ for simplicity:
\begin{equation}
    d(\theta_{ki}, f_i) {=}  \frac{c m s_k f_i}{F} {+} \frac{s_k}{r^{\text{UL}}(\theta_{ki}, f_i)}  {+} \frac{\rho s_k}{r^{\text{DL}}(\theta_{ki})}.
\end{equation}
The binary reward for agent~$k$ for offloading a task to server~$i$ is 1 if $d(\theta_{ki}, f_i) {\le} d_{\max}$ and zero otherwise. The cumulative density function (CDF) of the data size distribution determines the success probability $Q(\theta_{ki}, f_i)$. Given that the packet sizes follow a truncated normal distribution with density proportional to $\mathcal{N(\mu_{\text{s}}, \sigma_{\text{s}})}$  and $s_k {\in} [s_a, s_b], s_a {>} 0$, the CDF yields
\begin{align} 
    Q(\theta_{ki}, f_i) &= \mathbb{P}\left[d(\theta_{ki}, f_i) \le d_{\max}  \right] \nonumber \\
   & = \mathbb{P}\left( s_k \le \frac{d_{\max}F B r(\theta_{ki})}{ m f_i \left(c B r(\theta_{ki}) + F \right) + \rho \nu F}   \right)      \label{eq:q-def} \nonumber \\
   &= \frac{ \Phi' \left(
    \frac{d_{\max}F B r(\theta_{ki}) }{ m f_i \left(c B r(\theta_{ki})   + F\right) + \rho \nu F} 
     \right) -\Phi'( s_a )}
   {\Phi'\left(s_b\right) - \Phi' \left( s_a  \right)},
\end{align}
where $\Phi'(.)$ is the CDF of the standard normal distribution with its argument normalized by $\mu_s$ and $\sigma_s$, $\Phi'(x) {=} \frac{1}{2} \left( 1{+} \text{erf} \left (\frac{x - \mu_s}{\sqrt{2}\sigma_s} \right)\right) $.
\begin{proposition}
    The task allocation mean field MAB game has a unique MFSS.
\end{proposition}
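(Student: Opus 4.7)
The plan is to directly invoke Theorem~\ref{th:theorem1}: it suffices to verify the Lipschitz condition \eqref{eq:condition1} on $Q(\theta_{ki}, f_i)$ in its second argument (uniformly in the first) and then appeal to \eqref{eq:condition2} as a standing assumption on the continuation probability~$\beta$. So the proof reduces to exhibiting a finite Lipschitz constant $L$ such that $|Q(\theta,f) - Q(\theta,f')| \le L|f-f'|$ for all $\theta \in [0,1]$ and $f,f' \in [0,1]$.

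To do this, I would decompose the reward-probability as $Q(\theta,f) = \frac{\Phi'(g(\theta,f)) - \Phi'(s_a)}{\Phi'(s_b) - \Phi'(s_a)}$, where
\begin{equation}
    g(\theta,f) = \frac{d_{\max} F B\, r(\theta)}{m f\bigl(c B\, r(\theta) + F\bigr) + \rho \nu F}
\end{equation}
and $r(\theta) = \log_2(1+\gamma_{\max}\theta)$. First I would observe that the denominator of $Q$ is a positive constant independent of $f$, so Lipschitz continuity of $Q$ in $f$ reduces to Lipschitz continuity of $\Phi' \circ g$ in $f$. Next, since $\Phi'$ is the CDF of a normal distribution with standard deviation $\sigma_s$, its derivative is bounded by $1/(\sigma_s\sqrt{2\pi})$, so $\Phi'$ is globally Lipschitz. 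It therefore remains to bound $|\partial g/\partial f|$ uniformly in $(\theta,f)\in [0,1]^2$. Writing $g$ in the form $A(\theta)/(f\,B(\theta) + C)$ with $A,B \ge 0$ and $C = \rho\nu F > 0$ strictly positive and $f\ge 0$, one obtains $|\partial_f g| = A(\theta)B(\theta)/(fB(\theta)+C)^2 \le A(\theta)B(\theta)/C^2$. Since $\theta\in [0,1]$ makes $r(\theta)$ bounded, $A(\theta)B(\theta)$ is uniformly bounded, giving a finite uniform Lipschitz constant $L_g$ for $g$, and hence $L = L_g/(\sigma_s\sqrt{2\pi}(\Phi'(s_b)-\Phi'(s_a)))$ works for $Q$.

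With $L$ in hand, uniqueness of the MFSS follows from Theorem~\ref{th:theorem1} provided the continuation probability satisfies $\beta(1+L) < 1$; I would either state this explicitly as a standing hypothesis on the system parameters (which is the standard route in \cite{gummadi2013mean}) or, equivalently, note that the agents' geometric lifetime can be chosen short enough relative to the problem constants $d_{\max}, F, B, c, \rho, \nu, \sigma_s$ that appear in $L$. The only non-routine step is ensuring the denominator $mf(cBr(\theta)+F)+\rho\nu F$ is bounded away from zero uniformly in $(\theta,f)$; this is immediate because $\rho\nu F$ is strictly positive, which is ultimately the reason the downlink contribution to the delay prevents $g$ from blowing up at $f=0$ and yields a finite Lipschitz constant. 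The rest is just bookkeeping with the explicit form of the truncated-normal CDF.
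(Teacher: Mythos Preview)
Your proposal is correct and follows essentially the same route as the paper: both verify condition~\eqref{eq:condition1} by bounding $|\partial_f Q|$ via the chain rule, using that the Gaussian density is bounded by $1/(\sigma_s\sqrt{2\pi})$ and that the denominator of $g$ is uniformly bounded below by $\rho\nu F>0$ (equivalently, the paper sets $f_i=0$ to maximize the fraction), and then invoke~\eqref{eq:condition2} as a condition on $\beta$. Your presentation is arguably a bit cleaner in that you explicitly take the supremum over $\theta$ to make $L$ uniform, whereas the paper's stated $L$ still carries $r(\theta_{ki})$; but the argument and the resulting constant are the same.
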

\begin{proof}
    Theorem~\ref{th:theorem1} states that a unique fixed MFSS exists if the conditions in \eqref{eq:condition1} and \eqref{eq:condition2} hold. 
    For a type $\theta_{ki} {\in} [0,1]$ and fractions $f_i, f_i' {\in} [0,1]$ of agents, \eqref{eq:condition1} is expressed as 
$
    \left\lvert   Q(\theta_{ki}, f_i) - Q(\theta_{ki}, f_i')  \right\rvert {\le} L \left\lvert   f_i - f_i'  \right\rvert.
$

As $Q(\theta_{ki}, f_i)$ is differentiable on [0,1], we prove that it is Lipschitz continuous by showing that  $\left\lvert\frac{\partial Q(\theta_{ki}, f_i)}{\partial f_i} \right\rvert {\le} L$ as follows:
\begin{align}\label{eq:ls-cont}
    \left\lvert\frac{\partial Q(\theta_{ki}, f_i)}{\partial f_i} \right\rvert &{=} 
    \frac{1}{\sqrt{2 \pi} \sigma_{\text{s}} Z}
    \frac{m d_{\max} F B r(\theta_{ki}) \left(c B r(\theta_{ki}) {+}F\right)}{\left(m f_i \left(c B r(\theta_{ki}) {+}F\right) {+}\rho \nu F \right)^2}   
    \nonumber \\  & \times
    e^{ - \frac{1}{2 \sigma_{\text{s}}^2}  \left(  \frac{d_{\max}F B r(\theta_{ki})}{m f_i \left(c B r(\theta_{ki}) {+}F\right) {+} \rho \nu F} - \mu_{\text{s}}  \right)^2},
\end{align}
where $Z {=} \Phi'\left(s_b \right) - \Phi'\left(s_a \right)$. The exponential term is bounded, $\left\lvert e^{ - \frac{1}{2 \sigma_{\text{s}}^2}  \left(  \frac{d_{\max}F B r(\theta_{ki})}{m f_i \left(c B r(\theta_{ki}) {+}F\right) {+} \rho \nu F} - \mu_{\text{s}}  \right)^2}  \right\rvert {\le} 1$, and the first term is maximized when $f_i {=} 0$. Therefore, the Lipschitz constant satisfies $L{=}   \frac{1}{\sqrt{2 \pi} \sigma_{\text{s}} Z} \frac{m d_{\max} F B r(\theta_{ki}) \left(c B r(\theta_{ki}) {+}F\right)}{\left(\rho \nu F \right)^2}$. Now, by condition \eqref{eq:condition2} the MFSS is unique when $\beta (1{+}L) < 1.$
\end{proof}
\section{Load-Balancing}
\label{sec:load_balancing}
In this section, we develop a simple load-balancing algorithm that enables the network manager to change the agents' distribution over the servers in a decentralized manner. 

In Section \ref{sec:modelling_task_allocation}, we showed that the offloading problem can be interpreted as a stationary mean field MAB game and proved that a unique MFSS exists for this game. That means, if all parameters remain fixed, the population profile $\boldsymbol{f}_t$ converges to a fixed vector $\boldsymbol{f}$ after a sufficiently large number of rounds $t$. However, the profile $\boldsymbol{f}$ might be undesirable due to practical limitations, such as the servers' cost or accessibility. Thus, we now consider a general form of load-balancing, where we change the arms' rewards such that the resulting population profile of the game closely matches a desired profile $\boldsymbol{f}^{\ast}$. We do this by multiplying the reward probabilities $Q(\theta_{ki},f_i)$ by arm-specific values $\alpha_{i} {\in} [0, 1]$. 
This can be carried out in the mean field setting by changing the type distribution $W$ of the agents, and the resulting system then continues to fulfill the assumptions of the mean field model in Theorem~\ref{th:theorem1}. We formalize this in the following proposition:
\begin{proposition}
\label{prop:type-reward-equiv}
For each $\alpha {\in} [0,1]$ and $\theta {\in} [0,1]$, there exists a corresponding $\theta' {\in} [0,1]$, such that the reward probability is the same,
$\forall f {\in} [0,1]: \; \alpha Q(\theta, f) {=} Q(\theta', f).$
\end{proposition}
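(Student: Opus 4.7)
The plan is to apply the intermediate value theorem to $Q(\cdot, f)$ viewed as a function of its first argument, exploiting the continuity and monotonicity inherited from $r$ and $\Phi'$. First I would establish that $Q(\theta', f)$ is continuous and non-decreasing in $\theta'$: the spectral efficiency $r(\theta') = \log_2(1 + \gamma_{\max}\theta')$ is continuous and strictly increasing on $[0,1]$ with $r(0) = 0$, the threshold $u(\theta', f) := d_{\max} F B r(\theta')/\bigl(m f (c B r(\theta') + F) + \rho \nu F\bigr)$ appearing in \eqref{eq:q-def} is a smooth monotone transformation of $r(\theta')$, and $\Phi'$ is a smooth increasing CDF, so $Q(\cdot, f)$ is a continuous monotone map.

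Next I would evaluate the endpoints to set up the IVT. At $\theta' = 0$ the threshold $u(0, f)$ vanishes; since $s_a > 0$ lower-bounds the support of $s_k$, the truncated success probability reduces to $Q(0, f) = 0$ for every $f$. At $\theta' = \theta$ we obtain $Q(\theta, f)$ by definition. Because $\alpha \in [0,1]$, the target value $\alpha Q(\theta, f)$ lies in $[0, Q(\theta, f)]$, and for each fixed $f$ the intermediate value theorem supplies some $\theta'_f \in [0, \theta]$ with $Q(\theta'_f, f) = \alpha Q(\theta, f)$.

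The crux of the proof is showing that a single $\theta'$ works simultaneously for every $f \in [0,1]$, not just pointwise in $f$. I would take the CDF identity $\Phi'(u(\theta', f)) = \alpha \Phi'(u(\theta, f)) + (1 - \alpha) \Phi'(s_a)$, exploit the rational-in-$r$ structure of $u$ to solve for $r(\theta')$ explicitly, and check that the $f$-dependence collapses, leaving $r(\theta')$ as a function of $r(\theta)$ and $\alpha$ alone. I expect this algebraic cancellation, together with verifying that the resulting $r(\theta')$ lies in $[0, \log_2(1 + \gamma_{\max})]$ so that $\theta' \in [0,1]$, to be the main obstacle: the mapping $r(\theta) \mapsto r(\theta')$ induced by the identity at one $f$ need not coincide with that induced at another. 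If the cancellation fails in closed form, a natural fallback is to read the claim as holding at the MFSS profile $f^{*}$, which is the only profile used by the load-balancing argument in Section~\ref{sec:load_balancing}, and there the pointwise IVT construction above already suffices.
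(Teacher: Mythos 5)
Your core argument coincides with the paper's proof: the paper likewise observes from \eqref{eq:q-def} that $Q(0,f)=0$, brackets the target value via $0 = Q(0,f) \le \alpha Q(\theta,f) \le Q(\theta,f)$, notes from \eqref{eq:ls-cont} that $Q$ is continuous in $\theta$, and invokes the intermediate value theorem to obtain $\theta' \in [0,\theta]$. Where you go beyond the paper is in the third step of your plan: you correctly identify that the proposition's quantifier order demands a \emph{single} $\theta'$ valid for every $f \in [0,1]$, whereas the IVT construction only yields a $\theta'$ for each fixed $f$. The paper's proof does not address this point at all --- it fixes $f$, applies the IVT, and stops --- so the ``crux'' you flagged is a gap in the published argument rather than a defect of your own reasoning. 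Your suspicion that the algebraic cancellation fails is also well founded: demanding $\Phi'\bigl(u(\theta',f)\bigr) = \alpha\,\Phi'\bigl(u(\theta,f)\bigr) + (1-\alpha)\,\Phi'(s_a)$ identically in $f$ over-determines the single scalar $r(\theta')$, since both sides are distinct non-constant smooth functions of $f$ unless $\alpha \in \{0,1\}$; solving at two different values of $f$ generically gives two different values of $r(\theta')$. Consequently neither proof establishes the proposition as literally stated; what both deliver is an $f$-dependent type adjustment. Your fallback --- reading the equivalence only at the profile actually realized (e.g., the steady-state $f$) --- is the honest repair, though you should note that Proposition~\ref{prop:f-monot} uses the type--reward equivalence to redefine the whole system $\hat S$ before its MFSS is known, so a fully rigorous treatment would either make the pushforward $g_{\boldsymbol{\alpha}}$ depend on $f$ or work directly with the modified reward $\alpha Q$ throughout, bypassing the type reparametrization entirely.
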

\begin{proof}
First, observe from \eqref{eq:q-def} that $Q(0, f) {=} 0$.  
As $\alpha < 1$:
\begin{equation}
    0 = Q(0, f) \leq \alpha Q(\theta, f)  \leq Q(\theta, f).
\end{equation}
Additionally, note from \eqref{eq:ls-cont} that $Q$ is continuous in $\theta$. By the intermediate value theorem, there exists $\theta' {\in} [0, \theta]$ such that 
\begin{equation}
\label{eq:q-equivalence}
    Q(\theta', f) = \alpha Q(\theta, f).
\end{equation}
\end{proof}
Therefore, we can define a function $g_{\boldsymbol{\alpha}}: [0,1]^n {\to} [0,1]^n$ that maps $\theta$ to $\theta'$ and satisfies \eqref{eq:q-equivalence}. The modified type distribution can then be defined as the pushforward $W_{\boldsymbol{\alpha}} {=} g_{\boldsymbol{\alpha} \#} W$.

The population profile $\boldsymbol{f}$ now depends on $\boldsymbol{\alpha}$ and we can formulate the minimization problem as follows:
\begin{equation}
\label{eq:load-balancing-problem}
\min_{\boldsymbol{\alpha} {\in} \Delta^{n}} \frac{1}{2}  \lVert \boldsymbol{f}(\boldsymbol{\alpha}) - \boldsymbol{f^{\ast}} \rVert_2^2,
\end{equation}
where $\Delta^n {=} \{ \alpha {\in} \mathbb{R}_{+}^n {\mid} \sum_i\alpha_i {=} 1 \}$ is the $n$-dimensional unit simplex, a constraint introduced to guarantee a unique solution, as we conjecture $\boldsymbol{f}(\boldsymbol{\alpha})$ to be invariant under multiplication of $\boldsymbol{\alpha}$ by a scaling factor greater than zero. To calculate the stationary profile $\boldsymbol{f}(\boldsymbol{\alpha})$, we resort to a numerical approach. However, to increase efficiency, we make the following observation: 
\begin{proposition}
\label{prop:f-monot}
Consider a multi-agent bandit system $S {=} (m,n,Q,W,\sigma)$ that has a MFSS $(\boldsymbol{f}, \mu)$. Assume that the reward $Q(\theta,f)$ is strictly monotonically decreasing in $f$, and all agents prefer winning arms, i.e., $\sigma$ satisfies
\begin{equation}
\label{eq:efficient-agents}
\sigma(\boldsymbol{z }+ \boldsymbol{w}_j) \geq \sigma(\boldsymbol{z }+ \boldsymbol{l}_j),
\end{equation}
for all $j {\in} \{1,\ldots,m\}$, where $\boldsymbol{w_j} {=} \boldsymbol{e}_{2j{-}1},\; \boldsymbol{l}_j {=} \boldsymbol{e}_{2j}$ are unit vectors corresponding to a win and a loss on arm $j$, respectively. Fix $i {\in} \{1,\ldots,m\}$, then let $\hat S {=} (m,n,Q,\hat W, \sigma)$ with $\hat{W}{=}g_{\boldsymbol{\alpha} \#}W$ and $\alpha_i {\in} \left[ 0,1\right]$, $\alpha_{j \neq i} {=} 1$. Then, $\hat S$ has a MFSS $(\boldsymbol{\hat f}, \hat \mu)$ with $\hat{f}_i {\leq} f_i$.
\begin{proof}
We use Proposition \autoref{prop:type-reward-equiv} to rewrite $\hat{S}$ using the adjusted reward function $\alpha Q$ instead of the adjusted type distribution. For simplicity, we drop the subscript $m$ and the type $\theta$ as both systems now have equal $W$.

The existence of the MFSS $(\boldsymbol{\hat f}, \hat{\mu})$ follows from Theorem~\ref{theorem:mfss-exist}. We continue the proof by contradiction: Assume $\hat f_i > f_i$. Therefore, more agents must pick arm~$i$:
\begin{equation}
    \hat f_i = \mathbb{E}_{z\sim\hat \mu}\left[\sigma(z,i)\right] >
    \mathbb{E}_{z\sim \mu}\left[\sigma(z,i)\right] = f_i.
\end{equation}
As $\sigma$ fulfills (\ref{eq:efficient-agents}), it holds that
$\hat \mu(Z_{x,i}) > \mu(Z_{x,i})  \; \forall x {\in} [0,1]$
where $Z_{x,i} \coloneqq \{ z : \frac{z_{2i{-}1}}{z_{2i{-}1} {+} z_{2i}} {\leq} x\}.$
This requires more agents winning on arm~$i$. As $\alpha Q < Q$ and $Q$ is strictly monotonically decreasing in $f$, $\hat{f_i} < f_i$ must hold to increase the win rate on arm~$i$, which contradicts $\hat f_i > f$.
\end{proof}
\end{proposition}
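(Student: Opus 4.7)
The strategy is proof by contradiction, with a monotone coupling of the two regenerating Markov chains at its core. First I would invoke Proposition~\ref{prop:type-reward-equiv} to reformulate $\hat{S}$ as a system that shares the type distribution $W$ of $S$ but uses the modified reward $\hat{Q}_i = \alpha_i Q$ on arm $i$ and $\hat{Q}_j = Q$ for $j \neq i$. Existence of the MFSS $(\boldsymbol{\hat f}, \hat{\mu})$ for this reformulated system follows from Theorem~\ref{theorem:mfss-exist}, since scaling by $\alpha_i \in [0,1]$ can only decrease the relevant Lipschitz constant, so the condition $\beta(1+L) < 1$ is preserved.

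Next I would assume for contradiction that $\hat{f}_i > f_i$ and extract a quantitative discrepancy in per-pull win probabilities on arm $i$. Using strict monotonicity of $Q$ in its second argument together with $\alpha_i \leq 1$, one has
\begin{equation}
\alpha_i Q(\theta, \hat{f}_i) \leq Q(\theta, \hat{f}_i) < Q(\theta, f_i)
\end{equation}
for every type $\theta$, so agents in $\hat{S}$ win on arm $i$ strictly less often than agents in $S$. Moreover, by the simplex constraint $\sum_j \hat{f}_j = \sum_j f_j = 1$, there is at least one $j \neq i$ with $\hat{f}_j \leq f_j$, hence $Q(\theta, \hat{f}_j) \geq Q(\theta, f_j)$: the competing arms become (weakly) more rewarding in $\hat{S}$.

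The third step is to convert these win-rate inequalities into a statement about the state measure. Because $\sigma$ satisfies (\ref{eq:efficient-agents}), the function $z \mapsto \sigma(z,i)$ is non-decreasing in the win count $z_{2i{-}1}$ and non-increasing in the loss count $z_{2i}$. Constructing a coupling of the two regenerative chains on $\mathbb{Z}_+^{2n}$ (with common regeneration events and synchronized arm pulls), the win-rate domination just established forces $\hat{\mu}$ to be first-order stochastically dominated by $\mu$ with respect to the partial order favoring arm $i$. Integrating the monotone test function $\sigma(\cdot, i)$ against both stationary measures yields
\begin{equation}
\hat{f}_i = \mathbb{E}_{z \sim \hat{\mu}}[\sigma(z, i)] \leq \mathbb{E}_{z \sim \mu}[\sigma(z, i)] = f_i,
\end{equation}
contradicting $\hat{f}_i > f_i$.

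The main obstacle is rigorously formalizing the coupling in the third step. The state vector is multidimensional and both chains differ on every arm simultaneously (arm $i$ becomes strictly worse while at least one competitor becomes weakly better), so a naive marginal argument is insufficient. I expect the workable approach is to build the coupling pull-by-pull, propagating the partial order through the action of $\sigma$ and through regeneration (which resets the state and therefore does not break monotonicity), and then to verify that this monotonicity is invariant under the mean-field operator whose fixed point is the MFSS guaranteed by Theorem~\ref{theorem:mfss-exist}.
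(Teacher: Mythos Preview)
Your proposal takes essentially the same route as the paper: invoke Proposition~\ref{prop:type-reward-equiv} and Theorem~\ref{theorem:mfss-exist}, assume $\hat f_i>f_i$ for contradiction, and close the loop via the strict monotonicity of $Q$ in $f$ together with condition~(\ref{eq:efficient-agents}). The paper is in fact terser than you are at the stochastic-dominance step---it simply asserts $\hat\mu(Z_{x,i})>\mu(Z_{x,i})$ without constructing any coupling---so the obstacle you single out is precisely the step the paper leaves informal.
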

By using UCB as $\sigma$ and $Q$ as in \eqref{eq:q-def}, the conditions in Proposition \ref{prop:f-monot} are fulfilled. Therefore, if we obtain a population profile $\boldsymbol{f}(\boldsymbol{\alpha})$ using a numerical method, we at least know in which direction to move $\boldsymbol{\alpha}$ to achieve a value closer to the desired profile $\boldsymbol{f^\ast}$.

Based on the discussion above, we propose to optimize \eqref{eq:load-balancing-problem} using a variant of projected gradient descent with a fixed step size $s$ and a rough estimate of the gradient of $\alpha$:
\begin{equation}
\nabla \boldsymbol{\alpha}_t {\approx}  \boldsymbol{f}(\boldsymbol{\alpha}_t) {-} \boldsymbol{f^\ast}, \quad
\boldsymbol{\alpha}_{t{+}1} {=} P_{\Delta^n}(\boldsymbol{\alpha}_t {-} s\cdot \nabla \boldsymbol{\alpha}_t),
\end{equation}
where $P_{\Delta^n}$ is the projection on the unit simplex. 

Algorithm \ref{alg:pgd} summarizes our proposed method. In Section \ref{sec:numerical-results}, numerical results show that our algorithm quickly converges to a near-optimal solution.
\begin{algorithm}
    \SetKwInOut{Input}{Input}
    \SetKwInOut{Output}{Output}
    \caption{Load-balancing algorithm}
    \label{alg:pgd}

    \textbf{def} loadBalancing($\boldsymbol{f}^\ast$, $s$):\\
    \Input{Desired population profile $\boldsymbol{f}^\ast$ and step size $s$}
    \Output{Set of parameters $\boldsymbol{\alpha}$ such that $\boldsymbol{f}(\boldsymbol{\alpha}) \approx \boldsymbol{f}^{\ast}$}
    $\boldsymbol{\alpha} \gets [1/n,\ldots,1/n]$ \\
    \While{not converged}{
        $\boldsymbol{f}_{t} \gets \text{simulateMAB}(\boldsymbol{\alpha}) \label{alg:mab}$\\
        $\nabla \boldsymbol{\alpha} \gets \boldsymbol{f}_t {-} \boldsymbol{f^\ast}$ \\
        $\boldsymbol{\alpha} \gets \text{projectSimplex}(\boldsymbol{\alpha} {-} s\cdot \nabla \boldsymbol{\alpha})$
    }
    return $\boldsymbol{\alpha}$
\end{algorithm}
\section{Numerical Results}
\label{sec:numerical-results}
In this section, we conduct four experiments to illustrate the convergence of the population profile to the MFSS and show the effectiveness of the proposed load-balancing algorithm. We consider a network with $m{\in}\{100,10^4\}$ agents and $n{\in}\{2,8\}$ arms. Each agent transmits tasks of random size $s_k {\in} [0.5, 1]  $ Mbit over a Rayleigh fading channel with parameter $1$  \cite{chen2019mobile}. The transmission power is $p_0 {=}200$ mW  \cite{chen2019mobile}, the noise power density is ${-}174$ dBm, and $I_{ki} {\in} [8,12]$ mW \cite{9136780}. The 
bandwidth $B{=}10$ MHz \cite{chen2019mobile} is shared between devices transmitting to the same server. Similarly, the available CPU cycles, $F{=}4$ GHz \cite{chen2019mobile}, are divided among the offloading devices. For $n {=} 2$, we choose $\boldsymbol{f}^{\ast} {=} \begin{bmatrix} 0.2 , 0.8 \end{bmatrix}$, for $n{=}8$ we choose $\boldsymbol{f}^{\ast} {=} \begin{bmatrix}  0.07 , 0.08 , \ldots , 0.13 , 0.3 \end{bmatrix}$. One can select the profile flexibly as long as the values are not extreme. 

We select the maximum delay as $d_{\text{max}} {=}2m \cdot(0.26)$. This value is the average delay per agent, which we obtained empirically when simulating the communication model, multiplied by $2m$ to accelerate the convergence of our algorithm. Since we require $\boldsymbol{\alpha}$ to sum to $1$, the reward of each server reduces. If $d_{\text{max}}$ is too low, the server gives very few rewards in total, and tuning $\alpha$ for this server (further reducing the rewards) does not change the agent choice too much.

To obtain the desired $\boldsymbol{\alpha}$ for the target profile, Algorithm \ref{alg:pgd} is executed for $150$ optimization steps. In \autoref{alg:mab}, the MAB game is simulated for $300$ steps. To stabilize the optimization, we choose $\boldsymbol{f}_t$ as the average profile over the last $50$ steps. This enables obtaining good convergence even for only $100$ agents, where the profile is still very noisy in equilibrium.

Fig.\ref{fig:1} shows the performance of our optimization procedure for a network with $2$ arms, tested with 100 agents in one scenario and $10^4$ agents in another. It depicts the fraction of agents choosing arm $1$ over $300$ simulation steps, comparing the unadjusted setting (blue) against the adjusted setting (orange) obtained after running our optimization algorithm. In the first case, because all fading channels are i.i.d., the devices are evenly distributed across the servers. After adjusting the population profile, the fraction of devices choosing the first server (arm) converges to the target value of $f^*_1{=}0.2$.

Fig. \ref{fig:2} provides the training curves of Algorithm~\ref{alg:pgd}. It reports $\lVert \boldsymbol{f}^{\ast} {-} \boldsymbol{f}_t \rVert_2^2$, where $\boldsymbol{f}_t$ is the profile obtained by averaging the last $50$ steps of a $300$-step MAB simulation. We observe good convergence in all four experiments.
\begin{figure}[h]
\centering
\includegraphics[width=0.48\textwidth]{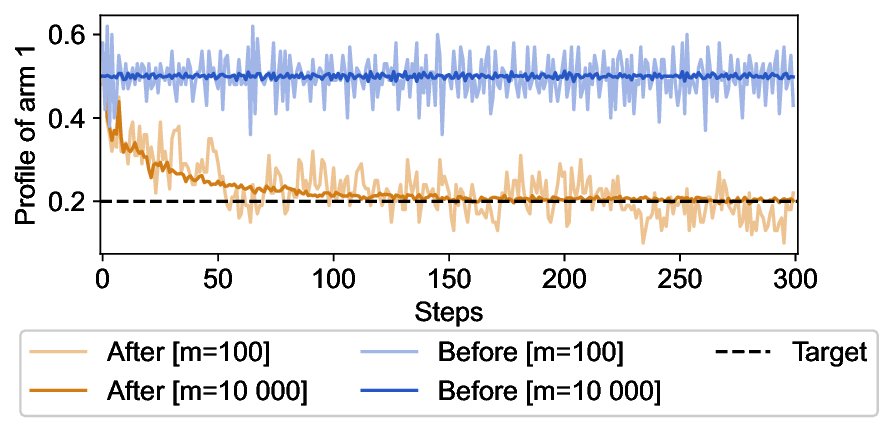}
\caption{The fraction of agents selecting arm 1, $f_1$, for $100$ and $10^4$ agents before (blue) and after (orange) applying the load-balancing algorithm. The value of $f_1^{*}$ is $0.2$, and there are $300$ optimziation steps.}
\label{fig:1}
\end{figure}
\begin{figure}[h]
\centering
\includegraphics[width=0.44\textwidth]{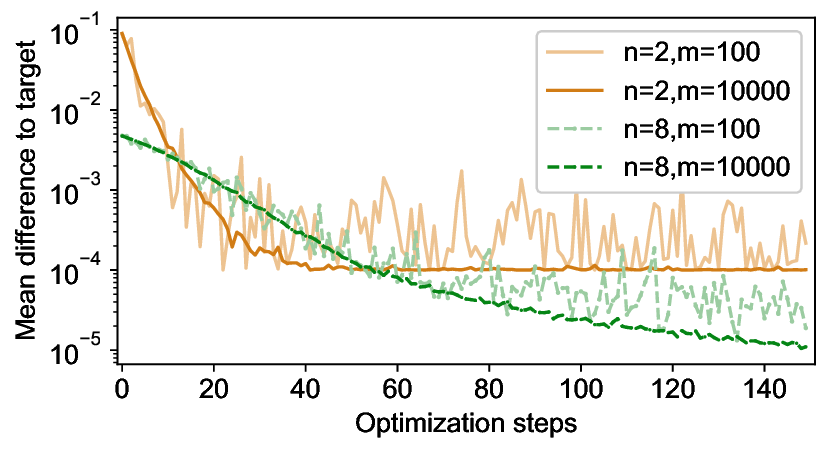}
\caption{ The mean distance between $f$ and $f^{*}$ over 150 optimization steps for networks with two (orange) and eight (green) servers with $100$ and $10^4$ agents in each case.}
\label{fig:2}
\end{figure}
\section{Conclusion}
We investigated a decentralized competitive task offloading and load balancing problem under uncertainty about available communication- and computation resources and tasks' size. We modeled it as a mean field MAB game and proved it has a unique MFSS. We proposed a novel decentralized decision-making policy that guarantees the convergence of servers' load to a target population profile according to theoretical- and numerical analysis.

Future work could extend this research in several directions. One of them is to investigate a more complicated system model, for example, the impact of interference on the system performance and its convergence to the MFSS. Furthermore, one can relax the assumption of a known target population profile and explore designing $f^{*}$ to minimize offloading costs or ensure fairness across servers. Lastly, we plan to conduct a more in-depth theoretical analysis of the proposed mean field problem. Of particular interest are theoretical guarantees for the convergence of our proposed algorithm, for example, by analyzing the convexity of $f(\alpha)$ and using some methods to directly estimate $f(\alpha)$ from $\alpha$, which might accelerate the simulation process.
\bibliography{bibliography_MEC}

\begin{thebibliography}{10}

\bibitem{JAFARNEJADGHOMI201750}
E.~J. {Ghomi}, A.~M. {Rahmani}, and N.~N. {Qader}, ``Load-balancing algorithms in cloud computing: A survey,'' {\em J. Netw. Comput. Appl.}, vol.~88, pp.~50--71, 2017.

\bibitem{dsouza2015survey}
M.~B. Dsouza, ``A survey of {Hadoop MapReduce} scheduling algorithms,'' {\em Int. J. Innov. Res. Comput. Commun. Eng}, vol.~3, no.~7, 2015.

\bibitem{liu2021game}
C.~Liu, K.~Li, and K.~Li, ``A game approach to multi-servers load balancing with load-dependent server availability consideration,'' {\em IEEE Trans. Cloud Comput.}, vol.~9, pp.~1--13, Jan. 2021.

\bibitem{shi2023}
Y.~Shi, E.~Alsusa, and M.~W. Baidas, ``On the application of uplink/downlink decoupled access in heterogeneous mobile edge computing,'' {\em Comput. Netw.}, vol.~223, Mar. 2023.

\bibitem{Cheng23:DTM}
X.~Cheng and S.~Maghsudi, ``Distributed task management in fog computing: A socially concave bandit game,'' {\em IEEE Trans. Green Commun. and Netw.}, vol.~7, no.~3, pp.~1486--1500, 2023.

\bibitem{wang2022decentralized}
X.~Wang, J.~Ye, and J.~C. Lui, ``Decentralized task offloading in edge computing: A multi-user multi-armed bandit approach,'' in {\em IEEE INFOCOM}, pp.~1199--1208, 2022.

\bibitem{Ghoorchian2021}
S.~Ghoorchian and S.~Maghsudi, ``Multi-armed bandit for energy-efficient and delay-sensitive edge computing in dynamic networks with uncertainty,'' {\em IEEE Trans. Cogn. Commun. and Netw.}, vol.~7, no.~1, pp.~279--293, 2021.

\bibitem{banez2021mean}
R.~A. Banez, L.~Li, C.~Yang, and Z.~Han, {\em Mean Field Game and Its Applications in Wireless Networks}.
\newblock Springer, 2021.

\bibitem{gummadi2013mean}
R.~Gummadi, R.~Johari, S.~Schmit, and J.~Y. Yu, ``Mean field analysis of multi-armed bandit games,'' in {\em proc. SSRN}, pp.~1--28, 2013.
\newblock [Online]. Available: \url{https://ssrn.com/abstract=2045842}.

\bibitem{maghsudi2017distributed}
S.~Maghsudi and E.~Hossain, ``Distributed user association in energy harvesting dense small cell networks: A mean-field multi-armed bandit approach,'' {\em IEEE Access}, vol.~5, pp.~3513--3523, 2017.

\bibitem{mancuso2023efficiency}
V.~Mancuso, L.~Badia, P.~Castagno, M.~Sereno, and M.~A. Marsan, ``Efficiency of distributed selection of edge or cloud servers under latency constraints,'' in {\em 21st Mediterranean Commun. and Comput. Netw. Conf. (MedComNet)}, pp.~158--166, 2023.

\bibitem{chen2019mobile}
X.~Chen, Z.~Liu, Y.~Chen, and Z.~Li, ``Mobile edge computing based task offloading and resource allocation in {5G} ultra-dense networks,'' {\em IEEE Access}, vol.~7, pp.~184172--184182, 2019.

\bibitem{guo2022joint}
M.~Guo, M.~Mukherjee, J.~Lloret, L.~Li, Q.~Guan, and F.~Ji, ``Joint computation offloading and parallel scheduling to maximize delay-guarantee in cooperative {MEC} systems,'' {\em Digit. Commun. and Netw.}, 2022.

\bibitem{9136780}
L.~Xiao, X.~Lu, T.~Xu, X.~Wan, W.~Ji, and Y.~Zhang, ``Reinforcement learning-based mobile offloading for edge computing against jamming and interference,'' {\em IEEE Trans. Commun.}, vol.~68, no.~10, pp.~6114--6126, 2020.

\end{thebibliography}
\bibliographystyle{ieeetr}

\end{document}